\documentclass[a4paper,reqno]{article}
\usepackage{amssymb}
\usepackage{latexsym}
\usepackage{amsmath}
\usepackage{euscript}
\usepackage{graphics,color}
\usepackage[all]{xy}
\usepackage[margin=3cm]{geometry}
\usepackage{MnSymbol}
\usepackage{authblk}
\usepackage{amsthm}
\usepackage{hyperref}

\newcommand{\be}{\begin{equation}}
\newcommand{\ee}{\end{equation}}
\newcommand{\ba}{\begin{eqnarray}}
\newcommand{\ea}{\end{eqnarray}}
\newcommand{\baa}{\begin{eqnarray*}}
\newcommand{\eaa}{\end{eqnarray*}}
\newcommand{\bb}{}
\newcommand{\ci}[1]{\cite{#1}}
\newcommand{\bi}[1]{\bibitem{#1}}
\newcommand{\lab}[1]{\label{#1}}
\newcommand{\re}[1]{(\ref{#1})}

\newcounter{my}

\newtheorem{theorem}{Theorem}[section]

\newtheorem{lemma}[theorem]{Lemma}

\theoremstyle{definition}

\newtheorem{remark}[theorem]{Remark}

\numberwithin{equation}{section}

\newcommand*{\email}[1]{\normalsize Email: natig@im.unam.mx }
\begin{document}
\title{The eigenvalues and eigenvectors of the  $5$D discrete Fourier transform 
number operator revisited}
\date{}

\author{Natig Atakishiyev}

\affil{Universidad Nacional Aut{\'o}noma de M{\'e}xico, \\
                Instituto de Matem{\'a}ticas,  Unidad Cuernavaca, \\
                Cuernavaca, 62210, Morelos, M{\'e}xico \\ \email{natig@im.unam.mx}}
\maketitle
\begin{abstract}
A systematic analytic approach to the evaluation of the eigenvalues and eigenvectors
of the $5$D discrete number operator $\mathcal{N}_5$ is formulated. This approach
is essentially based on the use of the symmetricity of $5$D discrete Fourier transform
operator $\Phi_5$ with respect to the discrete reflection operator $P_d$.
\end{abstract}

\section{Introduction}
Let me begin by recalling first that the eigenfunctions of the classical Fourier inegral
transform (FIT), associated with the eigenvalues ${\rm i}^n$, are explicitly given as 
\be
\psi_n(x): = H_n(x)\, {\exp(-x^2/2)},\quad n= 0,1,2,..., 
\label{psi}
\ee
where $ H_n(x) $ are Hermite polynomials. The functions $\psi_n(x)$ are usually
referred to as {\it Hermite functions} in the mathematical literature, whereas in
quantum mechanics they emerge as eigenfunctions of the Hamiltonian for the
linear harmonic oscillator, which is a self-adjoint differential operator of the 
second order (see, for example, \ci{LL}). It is well known that the  functions 
$\psi_n(x)$ are either symmetric or antisymmetric with respect to the reflection
operator $P$, defined on the full real line $x\in {\mathbb R}$ as $P\,x = -\, x$;
that is,
\be
P\,\psi_n(x) = \psi_n(-x) = (-1)^n\,\psi_n(x).
\label{Refpsi}
\ee
Recall also that the discrete (finite) Fourier transform (DFT) based on $N$ points 
is represented by an $N\times N$ unitary symmetric matrix $\Phi$  with entries 
\be
\Phi_{kl} = N^{-1/2} \, q^{kl}, \qquad k,l \in {\mathbb Z_N}:=\{0,1,2\dots, N-1\},
\label{DFT}
\ee
where $ q=\exp{\left(2 \pi {\rm i}\, / N\right)} $ is a primitive $N$-th root of unity
and $N$ is an arbitrary integer (see, for example, \cite{McCPar}-\cite{Ata}). 
The discrete analogue of the above mentioned reflection operator $P$, associated 
with the DFT operator (\ref{DFT}), is represented by the $N \times N$ matrix
\be
P_d:= C^{\intercal} J_N \equiv J_N\,C\,,
\label{Pd}
\ee
where $C$ is the {\it basic circulant permutation} matrix with entries $ C_{kl}
={\delta}_{k,l-1}$ and $J_N$ is the $N \times N$  {\lq}backward identity{\rq}
permutation matrix with ones on the secondary diagonal (see \cite{HJ}, pages 
26 and 28, respectively). Note that the matrix of the discrete reflection operator 
(\ref{Pd}) can be partitioned as
\be
P_d=\left[\begin{array}{cc}
              1                 &      0_{N-1}  \\
0^{\intercal}_{N-1} &     J_{N-1}   \\
\end{array}\right]
\label{partPd}\,,
\ee
where $0_m$ and $0^{\intercal}_{n}$ are $m$-row and $n$-column 
zero vectors, respectively.

It is readily verified that the DFT operator (\ref{DFT}) is $P_d$-symmetric,
that is, the commutator $[\Phi, P_d] := \Phi\,P_d - P_d\,\Phi=0$. Therefore,
similar to the continuous case (\ref{Refpsi}), the eigenvectors of the DFT
operator $\Phi$ should be either $P_d$-symmetric or $P_d$-antisymmetric.

The purpose of this work is to discuss some additional findings concerning 
symmetry properties of two finite-dimensional intertwining operators with 
the DFT matrix \re{DFT}. These operators are represented by matrices 
$A$ and $A^{\intercal}$ of the same size $N\times N$ such that the 
intertwining relations 
\be
A \,\Phi ={\rm i}\,\Phi A, \qquad A^{\intercal} \,\Phi 
= -\, {\rm i}\, \Phi \,A^{\intercal}, 
\lab{inter} 
\ee
are valid. The explicit form of the matrices $A$ and $A^{\intercal}$ is
\be
A= X + {\rm i} Y = X + D\,, \qquad A^{\intercal}= X - {\rm i} Y = X - D\,,
\label{intop}
\ee
where $X=diag\,(\mathsf{s}_0,\mathsf{s}_1,...,\mathsf{s}_{N-2},\mathsf{s}_{N-1}),$
\,$\mathsf{s}_n:=2\sin(2\pi n/N),\,n \in {\mathbb Z_N},$ and $Y= -\,\rm{ i}\,D=  
\rm{ i}\, (C^{\intercal} - C).$ The operators $X$ and $Y$ are Hermitian and play
the role of finite-dimensional analogs of the operators of the coordinate and
momentum in quantum mechanics, respectively. 

The intertwining operators $A$ and $A^{\intercal}$ have emerged in a paper \cite{MesNat} 
devoted to the problem of finding the eigenvectors of the DFT operator $\Phi$. They can be 
interpreted as discrete analogs of the quantum harmonic oscillator lowering and raising 
operators ${\bf a}= 2^{-1/2}\Big( x + \frac{d}{dx}\Big)$ and ${\bf {a}^{\dagger}}= 
2^{-1/2}\Big( x - \frac{d}{dx}\Big)$; their algebraic properties had been studied in
detail in \cite{AA2016} -\cite{AAZh}. In particular,  it was shown in \cite{AAZh} that
the operators $A$ and $A^{\intercal}$ form a cubic algebra $\mathcal{C}_q$ with 
$q$ a root of unity. This algebra is intimately related to the two other well-known 
realizations of the cubic algebra: the Askey-Wilson algebra \cite{Zhe_AW}--\cite{Tom2} 
and the Askey-Wilson-Heun algebra \cite{Bas_H}. Note also that from the intertwining 
relations (\ref{inter}) it follows at once that the operator $\mathcal{N}:=A^{\intercal} A$ 
commutes with the DFT operator $\Phi$, that is, $[\mathcal{N}\,,\Phi]=0.$ The discrete 
number operator $\mathcal{N}$ and the DFT operator $\Phi$ thus have the same 
eigenvectors and one can employ the former for finding an explicit form of the 
eigenvectors of the latter (see \cite{MesNat} for a more detailed discussion of this 
point). 

This idea that the discrete number operator $\mathcal{N}$ is the one that 
really governs the eigenvectors of the DFT operator $\Phi$, was first successfully
tested in \cite{AAMF} by considering the particular case of the $5$D DFT operator
$\Phi_5$. But the explicit form of the $4$ nonzero eigenvalues $\lambda_k, 1\leq k \leq 4$,
of the discrete number operator $\mathcal{N}_5$ have been found in  \cite{AAMF} by
using {\it Mathematica}. So it is the main goal of this work to formulate a systematic
analytic approach to the evaluation of the above-mentioned eigenvalues $\lambda_k$
without resorting to the help of any computer programs.

The lay out of the paper is as follows. In section 2 a detailed account is given on how 
one  can construct a $P_d$-symmetrized basis in the eigenspace ${\mathcal H}_5$ of
the discrete number operator $\mathcal{N}_5$, in terms of the eigenvectors of either 
the operator $X_5$, or the operator $Y_5$. In section 3 it is shown that the eigenspace 
${\mathcal H}_5$ with thus symmetrized basis splits into two $3$D and $2$D subspaces 
${\mathcal H}_3$ and ${\mathcal H}_2$; this remarkable fact is used then to find desired
explicit forms of the eigenvalues and eigenvectors of the discrete number operator 
$\mathcal{N}_5$. Finally, section 4 briefly outlines some further research directions 
of interest.

 \section{$5$D operators $X_5$ and $Y_5$ in the $P_d$-symmetrized basis}

This section begins by a quotation from \cite{AAZh}: {\it It is a remarkable 
fact that the operators $X$ and $Y$ are "classical" operators with nice spectral 
properties}. For the $5$D operator $X_5=diag\,(\mathsf{s}_0,\mathsf{s}_1,
\mathsf{s}_2,\mathsf{s}_3,\mathsf{s}_4)$, it is obvious because the spectrum 
of $X_5$ is 
\be
\lambda_n\,=\,\mathsf{s}_n={\rm i}(q^{-\,n} - q^{n}),\qquad n \in {\mathbb Z_5},
\lab{spX5}
\ee
where $q=\exp\,(2\pi {\rm i}/5)$ and we introduced for brevity $\mathsf{s}_n:=
2\sin\,(2\pi n/5)$. This indicates that the spectrum (\ref{spX5}) belongs to the 
class of the Askey--Wilson spectra of the type
\be
\lambda_n\,=\,C_1 q^{n} + C_2 q^{-\,n} + C_0\,.
\lab{AWsp}
\ee
The eigenvectors of the operator $X_5$ are represented by the Euclidean 
$5$-column orthonormal vectors $e_k$ with the components $(e_k)_l= 
\delta_{kl},\, k,l \in {\mathbb Z_5},$  that is,
\be
X_5\, e_k = \mathsf{s}_k\,e_k\,.
\lab{X5ev}
\ee
The spectrum of the matrix $Y_5$ belongs to the same Askey--Wilson family since the
operators $X_5$ and $Y_5$ are unitary equivalent, $Y_5=\Phi X_5 {\Phi}^{\dagger}$,
and hence isospectral \cite{AAZh}. Note that the spectrum of $X_5$ is simple, i.e., it
is nondegenerate. Also, from the unitary equivalence of the operators $X_5$ and $Y_5$
it follows that the eigenvectors of the latter operator are of the form
\be
Y_5\, {\epsilon}_k = \mathsf{s}_k\,{\epsilon}_k\,, \qquad{\epsilon}_n 
:= \Phi\, e_n = 5^{-1/2} \Big(1, q^n, q^{2n}, q^{3n}, q^{4n}\Big)^{\intercal}\,. 
\lab{Y5ev}
\ee
Let me draw attention now to the remarkable symmetry between the operators 
$X_5$ and $Y_5$: the operator $X_5$  is two-diagonal in the eigenbasis of the
operator $Y_5$, 
\be
X_5\, {\epsilon}_n = {\rm i}\,({\epsilon}_{n-1} - {\epsilon}_{n+1})\,, 
\lab{X5epsn}
\ee
whereas  the operator $Y_5$  is similarly two-diagonal in the eigenbasis of the
operator $X_5$,  
\be
Y_5\, e_n = {\rm i}\,(e_{n+1} - e_{n-1})\,.
\lab{Y5en}
\ee

\begin{remark} It may also be worth mentioning here that the $N$-column eigenvectors 
of the operator $Y$ for a general $N$,
\be
{\epsilon}_n = \Phi\, e_n = \sum_{k=0}^{N-1} {\Phi_{kn}\, e_k} 
= N^{-1/2} \Big(1, q^n, q^{2n},\dots, q^{(N-1)n}\Big)^{\intercal}\,, 
\lab{d_Phi_e}
\ee
form an orthonormal basis in the $N$-dimensional complex plane ${\mathbb C}^N$ and 
are frequently used therefore as building blocks of the discrete Fourier transform in 
applications (see, for example, p.130 in \cite{HarmAnal}, where the ${\epsilon}_n$
referred to as {\it discrete trigonometric functions}).
\end{remark} 
Since the operators $X$ and $Y$ generate a particular algebra, associated with the DFT 
operator $\Phi$ for arbitrary integer values of $N$,  one should use the eigenvectors of 
either the operator $X$, or the operator $Y$, as the most convenient basis for finding 
explicit forms of the eigenvectors of the operator $\Phi$. But we know that the eigenvectors 
of the operator $\Phi$ should be either $P_d$-symmetric, or $P_d$-antisymmetric, whereas 
the eigenvectors of both the operators $X$ and $Y$ do not reveal any symmetry property 
of this type. The point is that the reflection operator $P_d$ acts in the same way on both 
the eigenvectors $e_n$ and ${\epsilon}_m$, that is, 
\be
P_d\, e_n = e_{N-n},\quad e_N=e_0\,,\qquad\quad P_d \,{\epsilon}_n 
= {\epsilon}_{N-n},\quad {\epsilon}_N={\epsilon}_0\,.
\lab{Pdenepn}
\ee
Hence, the reflection operator $P_d$ does not transform the eigenvectors $e_0$ 
and ${\epsilon}_0$, and acts similarly by {\it cyclic permutation} on the other 
eigenvectors $e_m$ and ${\epsilon}_n$, with $1\leq m,n\leq N-1$. To overcome
this type of obstacle on the way of finding the eigenvectors of the operator $\Phi$, 
one thus needs to find first some $P_d$-symmetric bases, associated with both of 
the operators $X$ and $Y$. This can be achieved as follows.

Returning now to the case of the $5$D operators $X_5$ and $Y_5$, let us consider 
first unit column-vectors ${\widetilde e}_n,  n \in {\mathbb Z_5}$, defined in terms 
of the eigenvectors $e_n$ of the operator $X_5$ as
\be
{\widetilde e}_0=e_0, \qquad {\widetilde e}_k=\frac{1}{\sqrt 2}\,(e_k - e_{5-k}), 
\,\, k=1,2, \qquad {\widetilde e}_l=\frac{1}{\sqrt 2}\,(e_l + e_{5-l}),\,\, l=3,4.
\lab{symen}
\ee
The explicit componentwise forms of the thus $P_d$-symmetrized column-vectors 
${\widetilde e}_n$ are
\[
{\widetilde e}_0=(1,0,0,0,0)^{\intercal}, \quad {\widetilde e}_1=\frac{1}{\sqrt 2}\,
(0,1,0,0,-1)^{\intercal}, \quad {\widetilde e}_2=\frac{1}{\sqrt 2}\,(0,0,1,-1,0)^{\intercal}, 
\]
\be
{\widetilde e}_3=\frac{1}{\sqrt 2}\,(0,0,1,1,0)^{\intercal}, \quad
{\widetilde e}_4=\frac{1}{\sqrt 2}\,(0,1,0,0,1)^{\intercal}. 
\lab{cwsymen}
\ee
The interrelation (\ref{symen}) between the vectors ${\widetilde e}_n$ and the 
eigenvectors  $e_n$ of the operator $X_5$ can be written in the compact form as
${\widetilde e}_k= T e_k,\,  k \in {\mathbb Z_5}$, where the unitary matrix $T$ is
\be
T=\frac{1}{\sqrt 2}\left[\begin{array}{ccccc}
\sqrt 2 &  0  &  0  &  0 &  0 \\
    0      &  1  &   0  &  0 &  1 \\
    0      &  0  &   1  &  1 &  0 \\
    0      &  0  &  -1  &  1 &  0 \\
    0      & -1  &   0  &  0 &  1 
\end{array}\right]\,, \qquad T^{-1}= T^{\intercal}\,,\quad  T\, T^{-1} = T^{-1} T = I_5.
\lab{T}
\ee
Note that from the geometric point of view the matrix $T$  represents simply a product 
of two rotations by the same angle $\alpha=\pi/4$ in the $14$- and $23$-planes of the 
$5$D-space, that is, $T= R_{14}(\pi/4)R_{23}(\pi/4)$, where 
\be
R_{14}(\pi/4) =\left[\begin{array}{ccccc}
    1      &  0  &  0  &  0 &  0 \\
    0      &  \cos{\frac{\pi}{4}}   &   0  &  0 & \sin{\frac{\pi}{4}} \\
    0      &  0  &   1  &  0 &  0 \\
    0      &  0  &  0 &  1 &  0 \\
    0      & - \sin{\frac{\pi}{4}}  &   0  &  0 &\cos{\frac{\pi}{4}} 
\end{array}\right], \quad 
R_{23}(\pi/4)=\left[\begin{array}{ccccc}
    1 &  0  &  0  &  0 &  0 \\
    0      &  1  &   0  &  0 &  0\\
    0      &  0  &\cos{\frac{\pi}{4}}    &\sin{\frac{\pi}{4}} &  0 \\
    0      &  0  &  -\sin{\frac{\pi}{4}}  &\cos{\frac{\pi}{4}}  &  0 \\
    0      &  0  &   0  &  0 &  1 
\end{array}\right].
\lab{R14R23}
\ee

Similarly, let us introduce now $5$ orthonormal column-vectors ${\widetilde \epsilon}_n,  
n \in {\mathbb Z_5}$, defined in terms of the eigenvectors $\epsilon_n$ of the operator 
$Y_5$ as ${\widetilde \epsilon}_k= T {\epsilon}_k,\,  k \in {\mathbb Z_5}$, with the same 
matrix $T$ as in (\ref{T}). Then the explicit forms of these $P_d$-symmetrized $5$-vectors 
${\widetilde \epsilon}_n$ are
\[
{\widetilde \epsilon}_0= {\epsilon}_0= \frac{1}{\sqrt 5}(1,1,1,1,1)^{\intercal}, \qquad 
{\widetilde \epsilon}_1=\frac{1}{\sqrt 2}\,({\epsilon}_1 - {\epsilon}_4)=
\frac{\rm i}{\sqrt {10}}(0,\mathsf{s}_1,\mathsf{s}_2,-\mathsf{s}_2,
-\mathsf{s}_1)^{\intercal}, 
\]
\[
{\widetilde \epsilon}_2=\frac{1}{\sqrt 2}\,({\epsilon}_2 - {\epsilon}_3)=
\frac{\rm i}{\sqrt {10}}(0,\mathsf{s}_2,-\mathsf{s}_1,\mathsf{s}_1,
-\mathsf{s}_2)^{\intercal},
\]
\[
{\widetilde \epsilon}_3=\frac{1}{\sqrt 2}\,({\epsilon}_2 + {\epsilon}_3)=
\frac{\rm i}{\sqrt {10}}(\mathsf{c}_0,\mathsf{c}_2,\mathsf{c}_1,\mathsf{c}_1,
\mathsf{c}_2)^{\intercal},
\]
\be
{\widetilde \epsilon}_4=\frac{1}{\sqrt 2}\,({\epsilon}_1 + {\epsilon}_4)=
\frac{\rm i}{\sqrt {10}}(\mathsf{c}_0,\mathsf{c}_1,\mathsf{c}_2,\mathsf{c}_2,
\mathsf{c}_1)^{\intercal},
\lab{symeps}
\ee
where $\mathsf{c}_n:=2\cos\,(2\pi n/5)$. 

It remains only to recall that if $Z_{kl}:=(e_k, Z e_l)$ represent the matrix 
elements of the operator (matrix) $Z$ in the basis $e_n$, then the matrix
$\tilde Z:=T Z T^{-1}$  represents the matrix elements of the same 
operator $Z$ in the basis ${\widetilde e}_n = T e_n$. Hence the explicit
forms of the operators $X_5$ and $Y_5$ in the $P_d$-symmetrized basis
${\widetilde e}_n$ are 
\[
{\tilde X}_{5} = T X_5  T^{-1} = - \left[\begin{array}{ccccc}
    0      &           0          &            0       &        0           &         0            \\
    0      &           0          &            0       &        0           & \mathsf{s}_1 \\
    0      &           0          &            0       &\mathsf{s}_2&        0             \\
    0      &           0          &\mathsf{s}_2&       0            &        0             \\
    0      &\mathsf{s}_1  &           0        &       0            &        0
\end{array}\right]=
- \left[\begin{array}{cc}
               0_{33}         &       x_{32}   \\
    x^{\intercal}_{32} &       0_{22}  \\
 \end{array}\right], \qquad
 x_{32}:=\left[\begin{array}{ccc}
            0         &         0             \\
            0         & \mathsf{s}_1  \\
 \mathsf{s}_2 &         0            \\
 \end{array}\right], 
\]
\be
{\tilde D}_{5} = T D_5  T^{-1} =\left[\begin{array}{ccccc}
    0        &    0  &    0   &   0 &  - {\sqrt 2} \\
    0        &    0  &    0   &  -1 &         0        \\
    0        &    0  &    0   &   1 &         1        \\
    0        &    1  &  - 1  &    0 &         0        \\
{\sqrt 2}&    0  &  - 1  &    0 &         0
\end{array}\right]=
\left[\begin{array}{cc}
             0_{33}         & d_{32}   \\
-\, d^{\intercal}_{32} & 0_{22} \\
 \end{array}\right], \qquad
d_{32}:=\left[\begin{array}{ccc}
   0  & - \,{\sqrt 2}\\
- 1   &        0         \\
  1   &        1         \\
 \end{array}\right], 
\lab{tildeX5Y5}
\ee
where $0_{nn}$ is the $n\times n$ zero matrix. 

Finally, from (\ref{tildeX5Y5}) it follows that the intertwining operators
$A_5$  and $A^{\intercal}_5$ in the basis ${\widetilde e}_n $ can be
partitioned as 
\[
{\tilde A}_{5}={\widetilde X}_{5} + {\widetilde D}_{5}=
\left[\begin{array}{cc}
                 0_{33}         &-\,a_{32}(s)\\
a^{\intercal}_{32}(- s)&   0_{22}     \\
\end{array}\right], \qquad
a_{32}(s):=\left[\begin{array}{ccc}
             0            &   {\sqrt 2}      \\
             1            & \mathsf{s}_1 \\
\mathsf{s}_2 - 1 &         - 1         \\
\end{array}\right],            
\]
\be
{\tilde A}^{\intercal}_{5}={\widetilde X}_{5} - {\widetilde D}_{5}
= \left[\begin{array}{cc}
                 0_{33}         & a_{32}(- s)\\
- a^{\intercal}_{32}(s)&   0_{22}     \\
\end{array}\right], \qquad
a_{32}(- s):=\left[\begin{array}{ccc}
             0               &   {\sqrt 2}      \\
             1               & - \mathsf{s}_1 \\
- \,\mathsf{s}_2 - 1 &         - 1         \\
\end{array}\right].            
\lab{A5tildeen}
\ee

To close this section, it may be worth drawing attention now to the particular
manner in which the $P_d$-symmetrization modifies explicit forms of the DFT
eigenvectors $f_k,\,k \in {\mathbb Z_5}$  in the basis  ${\widetilde e}_n$. 
It is obvious that the partitioning of the intertwining operators ${\tilde A}_{5}$ 
and ${\tilde A}^{\intercal}_{5}$ of the form (\ref{A5tildeen}) leads to the need 
to appropriately split the DFT eigenvectors $f_k,\,k \in {\mathbb Z_5}$ in the 
basis ${\widetilde e}_n$ into two components, that is, to represent the vectors 
${\widetilde f}_k:= T f_k$ as 
\be
{\widetilde f}_k = (\eta_k, \xi_k)^{\intercal}, \qquad \eta_k:= (x_0,x_1,x_2),
\qquad \xi_k:= (x_3, x_4)\,.
\lab{etaxi}
\ee
Since every  symmetric $5$D DFT eigenvector $f^{(s)}$ in the basis $e_n$ is 
of the form $f^{(s)} = (a,b,c,c,b)^{\intercal}$, whereas every  antisymmetric 
$5$D DFT eigenvector $f^{(a)}$ in the same basis $e_n$ is of the form $f^{(a)} 
= (0,b,c,-\,c,-\,b)^{\intercal}$, it turns out that
\be
{\widetilde f}^{\,(s)}: = T f^{(s)} =  (a,{\sqrt 2}b,{\sqrt 2}c,0,0)^{\intercal},
\qquad {\widetilde f}^{\,(a)}: = T f^{(a)} = - {\sqrt 2}\,(0,0,0,c,b)^{\intercal}.
\lab{updown}
\ee
This means that all $5$D DFT eigenvectors ${\widetilde f}_k$  in the basis  
${\widetilde e}_n$ are either of the $\eta$-type (that is, with vanishing 
lower  component $\xi_k$), or of the $\xi$-type (with the upper component 
$\eta_k=0$).

\section{DFT number operator  ${ N}_5$  in the $P_d$-symmetrized basis}

Having defined explicitly the matrices $A_5$ and $A^{\intercal}_5$ in the 
$P_d$-symmetrized basis ${\widetilde e}_n$ in the previous section, it is 
not hard to evaluate that the  discrete number operator  ${\mathcal N}_5 = 
A^{\intercal}_5\,A_5$ in the same basis ${\widetilde e}_n$ is of the 
following form
\be
{\tilde {\mathcal N}}_5= {\tilde A}^{\intercal}_{5}\,{\tilde A}_{5}
= \left[\begin{array}{cc}
   {\mathcal N}_{3}  &         0_{32}        \\
 0^{\intercal}_{32} & {\mathcal N}_{2}\\
\end{array}\right],
\lab{N5til}
\ee
where $0_{32}$ is the $3\times2$ zero matrix and ${\mathcal N}_3$ and 
${\mathcal N}_2$ are $3\times 3$ and $2\times 2$ full Hermitian matrices,
\be 
{\mathcal N}_{3}:=\left[\begin{array}{ccc}
                   2                    &       - {\sqrt 2}\, \mathsf{s}_1   &              - {\sqrt 2}                           \\
 - {\sqrt 2}\, \mathsf{s}_1 &          3 - \mathsf{c}_2            & \mathsf{c}_1\mathsf{s}_2 - 1         \\
           - {\sqrt 2}             & \mathsf{c}_1\mathsf{s}_2 - 1& 2(\mathsf{s}_2 + 2) - \mathsf{c}_1  \\
\end{array}\right],  \qquad          
{\mathcal N}_{2}:=\left[\begin{array}{cc}
 2(2 - \mathsf{s}_2) - \mathsf{c}_1 & \mathsf{c}_1\mathsf{s}_2 + 1 \\
     \mathsf{c}_1\mathsf{s}_2 + 1   &          5 - \mathsf{c}_2               \\
\end{array}\right],
\lab{uv}       
\ee
respectively. Thus the Fock space ${\mathcal H}_5$ of all eigenvectors of the 
discrete number operator ${\mathcal N}_5$ in the $P_d$-symmetrized basis 
${\widetilde e}_n$ splits into two $3$D and $2$D subspaces ${\mathcal H}_3$
and ${\mathcal H}_2$; the operator ${\tilde {\mathcal N}}_5$ represents in the 
eigenspace ${\mathcal H}_5$ the direct sum of the operators ${\mathcal N}_3$ 
and ${\mathcal N}_2$, that is, ${\tilde {\mathcal N}}_5 \,=\,{\mathcal N}_3
\,{\oplus}\,{\mathcal N}_2$.

One clarifying remark must be made at this point in connection with (\ref{N5til}). 
The point is that this formula reveals that the discrete number operator 
${\tilde {\mathcal N}}_5$ in the $P_d$-symmetrized basis ${\widetilde e}_n$ 
has $12$ zero matrix elements, whereas its counterpart ${\mathcal N}_5$  in 
the basis of the eigenvectors $e_n$ is represented by a $5$D matrix with 
$25$ nonzero entries. Note that it was possible to formulate such a remarkable 
transformation of the full matrix ${\mathcal N}_5$ into the {\it sparse matrix}\, 
${\tilde {\mathcal N}}_5$ only because of the $P_d$-symmetricity of the DFT
operator $\Phi_5$. Recall then that the well-known Fast Fourier Transform 
algorithm of Cooley and Tukey is based essentially on a factorization of the 
Fourier matrix into a product of {\it sparse matrices} (see, for example, 
\cite{{HarmAnal},{FFT}}). Thus it becomes clear now that Cooley and 
Tukey had been able to construct so ingeniously their highly efficient 
implementation of the DFT  only because of the $P_d$-symmetricity of the 
Fourier matrix, although they had never employed explicitly this fundamental 
symmetry property of the Fourier matrix.

From (\ref{N5til}) it is evident that the eigenvectors and eigenvalues of 
the operator ${\tilde {\mathcal N}}_5$ may be now defined in terms of 
the eigenvectors and eigenvalues of the operators ${\mathcal N}_3$ 
and ${\mathcal N}_2$ from the separate subspaces ${\mathcal H}_3$ 
and ${\mathcal H}_2$, respectively. In order to proceed to this task
under consideration, let me start first with the operator ${\mathcal N}_2$.

\begin{lemma} An arbitrary $2\times 2$ Hermitian matrix of the form
$ M= \left[\begin{array}{cc}
  a  &   b  \\
  b  &   d  \\
\end{array}\right]
$ 
can be written as a linear combination of the $2$D identity matrix $I_2$
and the  $2\times 2$ traceless matrix $M^{\prime}$,
\be
M= u\, I_2 + {M^{\prime}} = 
u\,\left[\begin{array}{cc}
  1  &   0  \\
  0  &   1  \\
\end{array}\right] +
\left[\begin{array}{cc}
  v  &    b  \\
  b  &  - v  \\
\end{array}\right]\,,
\lab{M}
\ee
where $2u= a + d$ and $2v= a - d$. The eigenvalues of the matrix $M^{\prime}$
are equal to 
\be
{\lambda}_{1,2} = \pm\, \Big(- \det {M}^{\prime}\Big)^{1/2} 
= \pm\,(v^2 + b^2)^{1/2},
\lab{spMtil}
\ee
whereas the eigenvalues of the matrix $M$ are 
\be
\mu_{1,2} = {\lambda}_{1,2} + (a + d)/2 =  (a + d)/2 \,\pm\,(v^2 + b^2)^{1/2}.
\lab{spM}
\ee
\end{lemma}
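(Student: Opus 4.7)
The lemma is a routine linear-algebra statement and admits a very direct proof. I would organize it in three short steps matching the three displayed claims.

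First, I would verify the decomposition (\ref{M}) by inspection: adding the matrices on the right-hand side gives diagonal entries $u+v=(a+d)/2+(a-d)/2=a$ and $u-v=(a+d)/2-(a-d)/2=d$, while the off-diagonal entries are $b$, so the sum equals $M$. This fixes the definitions $u=(a+d)/2$ and $v=(a-d)/2$.

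Second, I would compute the eigenvalues of $M^{\prime}$ via its characteristic polynomial. Since $\operatorname{tr} M^{\prime}=v+(-v)=0$, the characteristic polynomial reduces to $\lambda^{2}+\det M^{\prime}=0$ with $\det M^{\prime}=-v^{2}-b^{2}$, so $\lambda^{2}=v^{2}+b^{2}$ and the two roots are $\lambda_{1,2}=\pm\sqrt{v^{2}+b^{2}}$, which is (\ref{spMtil}).

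Third, I would observe that for any vector $w$ with $M^{\prime}w=\lambda w$ we have $Mw=(uI_{2}+M^{\prime})w=(u+\lambda)w$; hence $M$ and $M^{\prime}$ share eigenvectors, and the spectrum of $M$ is obtained by shifting the spectrum of $M^{\prime}$ by $u=(a+d)/2$. This yields $\mu_{1,2}=(a+d)/2\pm\sqrt{v^{2}+b^{2}}$, which is (\ref{spM}).

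There is really no obstacle here: each step is a one-line verification, and the only thing to be careful about is keeping track of the sign conventions (in particular that $-\det M^{\prime}=v^{2}+b^{2}\ge 0$, so the square roots in (\ref{spMtil}) and (\ref{spM}) are real, as required by the Hermiticity of $M$).
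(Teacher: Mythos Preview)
Your proof is correct and follows essentially the same approach as the paper: compute the characteristic polynomial of $M'$ (using its tracelessness) to get $\lambda_{1,2}=\pm(v^2+b^2)^{1/2}$, and then shift by $u=(a+d)/2$ via $M=uI_2+M'$ to obtain the eigenvalues of $M$. Your additional verification of the decomposition and the explicit eigenvector argument for the shift are minor elaborations of what the paper leaves implicit.
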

\begin{proof} Since 
\be
\det\,({M^{\prime}} - \lambda\, I_2) = 
\det \left[\begin{array}{cc}
 v - \lambda&           b            \\
        b         &  - \,v - \lambda\\
\end{array}\right]= {\lambda}^2 - v^2 - b^2\,,
\lab{lmpr}
\ee
the eigenvalues ${\lambda}_{1,2}$  of the matrix $M^{\prime}$ are roots of 
the quadratic equation ${\lambda}^2 - v^2 - b^2=0$; hence ${\lambda}_{1,2}
=\pm\,(v^2 + b^2)^{1/2}$ and formula (\ref{spMtil}) is proved. Then from
(\ref{M}) it follows at once that the eigenvalues of the matix $M$ are equal
to  ${\mu}_{1,2}={\lambda}_{1,2} + (a + d)/2$ and formula (\ref{spM}) is 
proved as well. 
\end{proof}
Evidently, the matrix ${\mathcal N}_{2}$ is of the same type as the matrix
$M$ from the  lemma above, with $a= 2(2 - \mathsf{s}_2) - \mathsf{c}_1$,
$b=\mathsf{c}_1\mathsf{s}_2 + 1$ and $d= 5 - \mathsf{c}_2$. This means
that in this particular case $u=5-\mathsf{s}_2$,  $v=\mathsf{c}_2 - \mathsf{s}_2
= \mathsf{c}_2 \,b$ and $v^2 + b^2= 2{\sqrt 5}\,(\mathsf{s}_2 + 2\mathsf{c}_1)
= {(\mathsf{s}_1)^2} \, b^2$. Thus from (\ref{spM}) it follows that the eigenvalues
of the matrix ${\mathcal N}_{2}$ are 
\be
{\mu}_1= 5 - \mathsf{s}_2 + \mathsf{s}_1\,b= 
5 + \mathsf{s}_2\,(\mathsf{s}_2 + \mathsf{c}_1),\qquad
{\mu}_2= 5 - \mathsf{s}_2 - \mathsf{s}_1\,b= 
\mathsf{s}_1\,(\mathsf{s}_1 + \mathsf{c}_2).
\lab{spN2}
\ee

\begin{remark} The equation which is solved to find eigenvalues of $n\times n$
matrix $M$ is usually interpreted as the equation for finding roots of the {\it
characteristic polynomial} in $\lambda$ of degree $n$,
\be
p_n(\lambda):= \det\,(\lambda\, I - M) = \lambda^n + c_1 \lambda^{n - 1} + 
c_2 {\lambda}^{n - 2} + \cdots +  c_{n - 1} \lambda + c_n\,,
\lab{chapol}
\ee
where $I$ is the $n\times n$ identity matrix and the coefficient $c_k$ is $(-1)^k$
times the sum of the determinants of all of the principal $k\times k$ minors of $M$
(in particular, $c_1= - \,{\mathrm{trace}\,( M)}$ and  $c_n= (- 1)^n \det M$).
The lemma $3.1$ has been employed in order to reduce the {\it characteristic 
equation} $p_2(\lambda)= \lambda^2 + c_1 \lambda + c_2= 0$ for the matrix 
${\mathcal N}_{2}$ to the readily solvable equation for the matrix 
${\mathcal N}^{\prime}_2$, which is of the form $p_2(\lambda)= \lambda^2 + c_2= 0$.
\end{remark} 

Having defined the eigenvalues ${\mu}_1$ and ${\mu}_2$ of the matrix 
${\mathcal N}_{2}$, it is not hard to find eigenvectors of ${\mathcal N}_{2}$,
associated with those eigenvalues (\ref{spN2}). Indeed, note first that
\be
{\mathcal N}_{2}= (5 -  \mathsf{s}_2 )\, I_2 + {\mathcal N}^{\prime}_2,
\qquad  {\mathcal N}^{\prime}_2 =(1 + \mathsf{c}_1 \,\mathsf{s}_2 )
\left[\begin{array}{cc}
 \mathsf{c}_2 &               1         \\
          1           &- \,\mathsf{c}_2\\
\end{array}\right]\,,
\lab{N2pr}
\ee
where the eigenvalues of the matrix $\left[\begin{array}{cc}
 \mathsf{c}_2 &               1         \\
          1           &- \,\mathsf{c}_2\\
\end{array}\right]$ are $\pm\,\mathsf{s}_1$. Therefore to find the 
eigenvectors of the matrix ${\mathcal N}_{2}$, it is sufficient to 
determine the eigenvectors of the much simpler matrix 
$\left[\begin{array}{cc}
 \mathsf{c}_2 &               1         \\
          1           &- \,\mathsf{c}_2\\
\end{array}\right]$. So one readily derives  that
\[ 
\left[\begin{array}{cc}
\mathsf{c}_2 &               1        \\
          1          &-\,\mathsf{c}_2\\
\end{array}\right]\,
\left(\begin{array}{c}
     \mathsf{c}_1   \\
 1 + \mathsf{s}_2 \\
\end{array}\right) = \mathsf{s}_1
\left(\begin{array}{c}
     \mathsf{c}_1   \\
 1 + \mathsf{s}_2 \\
\end{array}\right), 
\]
\be
\left[\begin{array}{cc}
\mathsf{c}_2 &               1        \\
          1          &-\,\mathsf{c}_2\\
\end{array}\right]\,
\left(\begin{array}{c}
 1 + \mathsf{s}_2 \\
   - \mathsf{c}_1   \\
\end{array}\right) = - \,\mathsf{s}_1
\left(\begin{array}{c}
 1 + \mathsf{s}_2 \\
     - \mathsf{c}_1 \\
\end{array}\right). 
\lab{evN2I}
\ee
Thus explicit forms of the two linearly independent eigenvectors of 
the operator ${\mathcal N}_{2}$, associated with the eigenvalues 
(\ref{spN2}), are 
\be
\varphi_1:= ( \mathsf{c}_1, 1 +  \mathsf{s}_2)^{\intercal}\,,\qquad
\varphi_2:=(1 +  \mathsf{s}_2, - \,\mathsf{c}_1)^{\intercal}
\lab{xi1,2}, 
\ee
respectively. Note that the vectors $\varphi_1$ and $\varphi_2$ are 
essentially the same as the down-components of the antisymmetric 
eigenvectors ${\widetilde f}_1 = T f_1$ and ${\widetilde f}_3 = T f_3$ 
of the discrete number operator ${\tilde {\mathcal N}}_5$ in the 
$P_d$-symmetrized basis ${\widetilde e}_n$, where $f_1$ and $f_3$
have been already derived in \cite{AAMF} by employing {\it Mathematica};
that is 
\be
{\widetilde f}_1 = (0_3, \varphi_1)^{\intercal}, \qquad
{\widetilde f}_3 = (0_3, \varphi_1)^{\intercal}.
\lab{f1f3}
\ee

Turning now to the case of the matrix ${\mathcal N}_3$, one may likewise 
employ the polynomial $p_3 (\lambda)= \lambda^3 
+ c_1 \lambda^2 + c_2 \lambda + c_3$ in order to find first the eigenvalues 
of ${\mathcal N}_{3}$. It turns out that the determinant of the matrix 
${\mathcal N}_{3}$ is equal to zero,
\[ 
\det\,{\mathcal N}_{3}=\left\vert\begin{array}{ccc}
                     2                    &       - {\sqrt 2}\, \mathsf{s}_1  &              - {\sqrt 2}                            \\
 - {\sqrt 2}\, \mathsf{s}_1 &          3 - \mathsf{c}_2             & \mathsf{c}_1 \mathsf{s}_2 - 1         \\
           - {\sqrt 2}                & \mathsf{c}_1\mathsf{s}_2 - 1 & 2(\mathsf{s}_2 + 2) - \mathsf{c}_1 \\
\end{array}\right\vert  
 =\left\vert\begin{array}{ccc}
                     2                    &                                  0                                         &               0                      \\
 - {\sqrt 2}\, \mathsf{s}_1 &           \mathsf{s}_1 - 2\, \mathsf{c}_2              &    - \,\mathsf{s}_2 - 1   \\
           - {\sqrt 2}                & 3(\mathsf{c}_2 - \mathsf{s}_1) - \mathsf{c}_1 &  (\mathsf{s}_2 + 1)^2 \\
\end{array}\right\vert  
\]
\be        
 =2 \left\vert\begin{array}{cc}
         \mathsf{s}_1 - 2\, \mathsf{c}_2               &    - \,\mathsf{s}_2 - 1    \\
 3(\mathsf{c}_2 - \mathsf{s}_1) - \mathsf{c}_1&  (\mathsf{s}_2 + 1)^2 \\
\end{array}\right\vert  
 =2 (\mathsf{c}_2 - \mathsf{s}_1)\,\left\vert\begin{array}{cc}
           - 1               &           - 1               \\
 \mathsf{s}_2 + 1 & \mathsf{s}_2  + 1 \\
\end{array}\right\vert = 0\,.
\lab{detN3}       
\ee
Hence the characteristic equation for the matrix ${\mathcal N}_{3}$ reduces to
the form 
\be
\lambda (\lambda^2 + c_1 \lambda + c_2) = 0\,.
\lab{cheqN3}
\ee
Consequently, one of the eigenvalues of  the matrix ${\mathcal N}_{3}$ is 
$\lambda_0=0$, whereas the two remaining eigenvalues of  ${\mathcal N}_{3}$ 
are roots of the quadratic equation 
\be
\lambda^2 + c_1 \lambda + c_2 = 0\,,
\lab{redcheqN3}
\ee
where the coefficient $c_1= - \,{\mathrm{trace}\,( {\mathcal N}_{3})}= -2\,(5 + \mathsf{s}_2)$ 
and the coefficient $c_2$,  which represents the sum of the determinants of the three principal 
$2\times 2$ minors of ${\mathcal N}_{3}$,  is readily evaluated to be $c_2= 10 + (4 \,\mathsf{s}_2
+ 3) (\mathsf{s}_1)^2$. So one concludes that 
\be
\lambda_{1,2}= - \frac{\mathsf{c}_1 }{2} \pm \sqrt{\frac{(\mathsf{c}_1 )^2}{4} - \mathsf{c}_2}
\,=\,5 + \mathsf{s}_2 \,\pm ( \mathsf{c}_1\,\mathsf{s}_2 - 1)\, \mathsf{s}_1,
\lab{rts1,2}
\ee
upon taking into account the readily verified identity $2(2 - \mathsf{s}_1)= (\mathsf{s}_2 
+ \mathsf{c}_2)^2$.

Quite similar to the case of the matrix ${\mathcal N}_{2}$, the knowledge of the
explicit forms of the eigenvalues for the matrix ${\mathcal N}_{3}$ essentially
simplifies the task of defining the appropriate eigenvectors of ${\mathcal N}_{3}$
for each of those eigenvalues. Indeed, by looking for solutions of the equation
${\mathcal N}_{3}\,\phi_{\lambda} = \lambda\,\phi_{\lambda}$ in the form                 
$\phi_{\lambda} = (x_0, x_1, x_2)^{\intercal}$, one arrives simply at a system
of three homogeneous equations 
\be 
\begin{array}{cccccc}
(2 - \lambda)\,x_0 &  -    & {\sqrt 2}\, \mathsf{s}_1\,x_1& - &{\sqrt 2}\,x_2 & = 0, \\
- {\sqrt 2}\, \mathsf{s}_1\,x_0&  + & (3 - \mathsf{c}_2 - \lambda)\,x_1 & + & (\mathsf{c}_1 \mathsf{s}_2 - 1)\,x_2& = 0,\\
- {\sqrt 2}\, x_0 & + & (\mathsf{c}_1\mathsf{s}_2 - 1)\,x_1 & + & [2(\mathsf{s}_2 + 2) - \mathsf{c}_1 - \lambda]\,x_2& = 0,\\
\end{array} 
\lab{evN3}
\ee
for the components of the $3$D column-vector $\phi_{\lambda}$. 

1°. In the case of $\lambda_0=0$ the system (\ref{evN3}) reduces to
\be 
\begin{array}{cccccc}
       {\sqrt 2}\,x_0                   &  -  &     \mathsf{s}_1\,x_1     & -  &     x_2      & = 0, \\
- {\sqrt 2}\,\mathsf{s}_1\,x_0& + & (3 - \mathsf{c}_2)\,x_1 & + & (\mathsf{c}_1 \mathsf{s}_2 - 1)\,x_2& = 0,\\
       - {\sqrt 2}\, x_0                & + & (\mathsf{c}_1\mathsf{s}_2 - 1)\,x_1& + & [2(\mathsf{s}_2 + 2) - \mathsf{c}_1 ]\,x_2& = 0.\\
\end{array} 
\lab{ev0N3}
\ee
Eliminating the component $x_0$ by adding to the second equation in (\ref{ev0N3}) 
the first one, multiplied by $\mathsf{s}_1$, one arrives at the relation $x_1 = (1 + 
\mathsf{s}_2)\,x_2$. Substituting this relation back into the first equation enables one
to express the component $x_0$ via the component $x_2$ as
\[
 {\sqrt 2}\,x_0 = \mathsf{s}_1\,x_1  + x_2 = (\mathsf{s}_1 - 2\,\mathsf{c}_2)\,x_2.
\]
Taking into account that the system (\ref{evN3}) defines the eigenvector $\phi_0$
up to the multiplication by an arbitrary constant factor, one thus concludes that 
\be
\phi_0 = \Big(\mathsf{s}_1- 2\,\mathsf{c}_2, {\sqrt 2}\,(1 + \mathsf{s}_2), {\sqrt 2}\,\Big)^{\intercal}.
\lab{phi0}
\ee

2°. In the case of $\lambda_1=5 + \mathsf{s}_2 \,+ ( \mathsf{c}_1\,\mathsf{s}_2 - 1)
\,\mathsf{s}_1= 5 +  \mathsf{s}_2\,(\mathsf{s}_2 - \mathsf{c}_1)$, the system of
equations (\ref{evN3}) reduces to
\be 
\begin{array}{cccccc}
[\mathsf{c}_1(\mathsf{s}_2 + 1) - 5]\,x_0 &  -    & {\sqrt 2}\, \mathsf{s}_1\,x_1& - &{\sqrt 2}\,x_2 & = 0, \\
- {\sqrt 2}\, \mathsf{s}_1\,x_0&  + & [\mathsf{c}_1(\mathsf{s}_2 + 2) - 3]\,x_1 & + 
& (\mathsf{c}_1 \mathsf{s}_2 - 1)\,x_2& = 0,\\
- {\sqrt 2}\, x_0 & + & (\mathsf{c}_1\mathsf{s}_2 - 1)\,x_1 & - & (\mathsf{c}_2\,\mathsf{s}_1 + 3) \,x_2& = 0,\\
\end{array} 
\lab{ev1N3}
\ee
As in the previous case of $\lambda_0=0$, one eliminates the component $x_0$ by adding 
to the second equation in (\ref{ev1N3}) the third one, multiplied by $-\,\mathsf{s}_1$. This
leads to the relation 
\be
a\,x_1 + b\,x_2=0,\qquad a={\sqrt 5}\,\mathsf{s}_2 + 3\,\mathsf{c}_1 - 5, 
\qquad b = (4 - \mathsf{c}_1)\,\mathsf{s}_1  + 3\,{\mathsf{c}}_2 - 2,
\lab{ab}
\ee
interconnecting the components $x_1$ and $x_2$. It turns out that the coefficients
$a$ and $b$ in the relation (\ref{ab}) have a common factor, 
\be
a = \epsilon\, (2\,\mathsf{s}_2 - 2\mathsf{c}_1 + 3), \qquad
b = \epsilon\,(2\,\mathsf{s}_2 +1), \qquad \epsilon = -\,\mathsf{c}_2\,(\mathsf{c}_2\,\mathsf{s}_1 + 3).
\lab{eps}
\ee
Eliminating this common factor from the relation (\ref{ab}) reduces it to the
simpler form,
\be
(2\,\mathsf{s}_2 - 2\mathsf{c}_1 + 3)\,x_1 + (2\,\mathsf{s}_2 +1)\,x_2=0,
\lab{x1x2}
\ee
from which it follows at once that $x_1=-\,(2\,\mathsf{s}_2 +1) $ and 
$x_2=2\,(\mathsf{s}_2 - \mathsf{c}_1) + 3$. Substituting these values of 
$x_1$ and $x_2$ into the first equation in (\ref{ev1N3}), one finally finds 
that the component $x_0= {\sqrt 2}\,\mathsf{c}_1$. Thus the eigenvector
$\phi_1$, associated with the eigenvalue $\lambda_1$, has the form
\be
\phi_1 = \Big({\sqrt 2}\,\mathsf{c}_1,- \,2\,\mathsf{s}_2 - 1, \,2\,(\mathsf{s}_2 - \mathsf{c}_1) + 3\,\Big)^{\intercal}.
\lab{phi1}
\ee
3°. Finally, in the case of $\lambda_2=5 + \mathsf{s}_2 \,- ( \mathsf{c}_1\,\mathsf{s}_2 - 1)
\,\mathsf{s}_1= \mathsf{s}_1\,(\mathsf{s}_1 - \mathsf{c}_2)$, the system of equations 
(\ref{evN3}) reduces to
\be 
\begin{array}{cccccc}
\mathsf{c}_2\,(\mathsf{s}_1 + 1)\,x_0 & - & {\sqrt 2}\, \mathsf{s}_1\,x_1& - &{\sqrt 2}\,x_2 & = 0, \\
     - {\sqrt 2}\, \mathsf{s}_1\,x_0         &+ & (\mathsf{c}_2\,\mathsf{s}_1 + 1) \,x_1 & + & (\mathsf{c}_1 
\mathsf{s}_2 - 1)\,x_2& = 0,\\
- {\sqrt 2}\, x_0 & + & (\mathsf{c}_1\mathsf{s}_2 - 1)\,x_1 & - & {\mathsf{c}^2_1}\,(\mathsf{s}_1 + \mathsf{s}_1)\,x_2& = 0.\\
\end{array} 
\lab{ev2N3}
\ee
As in the previous case of $\lambda_1$, one eliminates the component $x_0$ by adding 
to the second equation in (\ref{ev2N3}) the third one, multiplied by $-\,\mathsf{s}_1$. 
This leads to the relation $x_1=x_2$, which then enables one to find from the first 
equation in (\ref{ev2N3}) that $x_0=- {\sqrt 2}\, \mathsf{c}_1\,x_1$. Thus the 
eigenvector $\phi_2$, associated with the eigenvalue $\lambda_2$, has the form
\be
\phi_2 = \Big(-\,{\sqrt 2}\,\mathsf{c}_1, \,1, \,1\Big)^{\intercal}.
\lab{phi2}
\ee
It remains only to add that the vectors $\phi_0$, $\phi_1$ and $\phi_2$,
associated with the eigenvalues $\lambda_0$, $\lambda_1$ and $\lambda_2$,
are essentially the same as the up-components of the three symmetric 
eigenvectors ${\widetilde f}_0 = T f_0$, ${\widetilde f}_4 = T f_4$ 
and ${\widetilde f}_2 = T f_2$  of the discrete number operator 
${\tilde {\mathcal N}}_5$ in the $P_d$-symmetrized basis ${\widetilde e}_n$, 
where $f_0$, $f_2$ and $f_4$ have been already derived in \cite{AAMF} by 
employing {\it Mathematica}; that is, 
\be
{\widetilde f}_0 = (\phi_0, 0_2)^{\intercal}, \qquad
{\widetilde f}_2 = (\phi_2, 0_2)^{\intercal}, \qquad
{\widetilde f}_4 = (\phi_1, 0_2)^{\intercal}.
\lab{f0f2f4}
\ee

\section{Concluding remarks}

To conclude this work, the following should be recalled first. Recently it has been proved 
that the {\lq}position{\rq} and {\lq}momentum{\rq} DFT operators $X$ and $Y$ form a 
special case of the Askey-Wilson algebra $AW(3)$ \cite{AAZh}. So it would be appropriate
to use the eigenvectors of either $X$, or $Y$, as a basis in the eigenspace of the discrete
number operator $\mathcal N$, that governs the eigenvectors of the DFT operator $\Phi$.
In this work it is shown that in the case of DFT this technique of employing the {\lq}position{\rq} 
$e_n$ and {\lq}momentum{\rq} $\epsilon_n$ eigenvectors for resolving an eigenvalue 
problem for the discrete number operator $\mathcal N$ is not applicable, unless those
eigenvectors are being symmetrized with respect to the discrete reflection operator $P_d$.
Therefore the $P_d$-symmetrization operator $T$ is found and a remarkable fact is 
established: it turns out that the matrix of the discrete number operator ${\mathcal N}_5$ 
in the $P_d$-symmetrized basis ${\widetilde e}_n= T e_n$ has only half of the number
of the nonzero entries of the same matrix in the initial basis $e_n$. This {\it sparsealization} 
of the discrete number operator ${\mathcal N}_5$ is shown to be essentially helpful for
finding explicit forms of the eigenvalues and eigenvectors of the operator ${\mathcal N}_5$.
Finally, I believe that just a bit more time is needed now to resolve an eigenvalue problem
for the DFT number operator ${\mathcal N}$ of a general dimension $N$.

\section{Acknowledgments}

I am  profoundly grateful to Alexei Zhedanov for the long illuminating discussions that
catalyzed the appearance of this work.  
\newpage

\bb{99}

\bi{LL} L.D.Landau, E.M.Lifshitz, Quantum Mechanics (Non-relativistic Theory), Pergamon
Press, Oxford, 1991.

\bi{McCPar} J.H.McClellan and T.W.Parks, {\it Eigenvalue and eigenvector decomposition 
of the discrete Fourier transform}, IEEE Trans. Audio Electroac., {\bf AU-20}, 66--74, 1972.

\bi{AusTol} L.Auslander and R.Tolimieri, {\it Is computing with the finite Fourier transform 
pure or applied mathematics?}  Bull. Amer. Math. Soc., {\bf 1}, 847--897, 1979.

\bi{DicSte} B.W.Dickinson  and K.Steiglitz, {\it Eigenvectors and functions of the discrete 
Fourier transform}  IEEE Trans. Acoust. Speech, {\bf 30}, 25--31, 1982.

\bi{Mehta} M.L.Mehta, {\it Eigenvalues and eigenvectors of the finite Fourier transform}, 
J. Math. Phys., {\bf 28}, 781--785, 1987.

\bi{Matv} V.B.Matveev, {\it Intertwining relations between the Fourier transform and
discrete  Fourier transform, the related functional identities and beyond}, Inverse Prob., 
{\bf 17}, 633--657, 2001.

\bi{Ata} N.M.Atakishiyev, {\it On $q$-extensions of Mehta's eigenvectors of the 
finite Fourier transform}, Int. J. Mod. Phys. A, {\bf 21}, 4993--5006, 2006.

\bi{HJ} R.A.Horn,  C.R.Johnson,  Matrix analysis, Cambridge University Press, 
Cambridge, 2009.

\bi{MesNat} M.K.Atakishiyeva and N.M.Atakishiyev, {\it On the raising and lowering difference 
operators for eigenvectors of the finite Fourier transform}, J. Phys: Conf. Ser., {\bf 597}, 
012012, 2015.

\bi{AA2016} M.K.Atakishiyeva and N.M.Atakishiyev, {\it On algebraic properties of the 
discrete raising and lowering operators, associated with the $N$-dimensional discrete 
Fourier transform}, Adv. Dyn. Syst. Appl., {\bf 11}, 81--92, 2016. 

\bi{4Open} M.K.Atakishiyeva, N.M.Atakishiyev and J.Loreto-Hern{\'a}ndez, {\it More 
on algebraic properties of the discrete Fourier transform raising and lowering operators}, 
4 Open, {\bf 2}, 1--11, 2019.

\bi{AAZh} M.K.Atakishiyeva, N.M.Atakishiyev and A.Zhedanov, {\it An algebraic interpretation 
of the intertwining operators associated with the discrete Fourier transform}, J. Math. Phys., 
{\bf 62}, 101704, 2021.

\bi{Zhe_AW} A.S.Zhedanov, {\it {\lq}{\lq}Hidden symmetry{\rq}{\rq} of Askey-Wilson 
polynomials}, Theoretical and Mathematical Physics {\bf 89}, 1146--1157, 1991.

\bi{Ter_AW} P.Terwilliger, {\it The Universal Askey-Wilson Algebra}, SIGMA {\bf 7}, 
069, 2011, arXiv:1104.2813.

\bi{Tom1} T.H.Koornwinder, {\it The relationship between Zhedanov's algebra AW(3) and
the double affine Hecke algebra in the rank one case}, SIGMA {\bf 3}, 063, 2007, 15 pp.; 
arXiv:math/0612730v4 [math.QA].

\bi{Tom2} T.H.Koornwinder, {\it Zhedanov's algebra AW(3) and the double affine Hecke 
algebra in the rank one case.II. The spherical subalgebra}, SIGMA {\bf 4}, 052, 2008, 17 pp.; 
arXiv:0711.2320v3 [math.QA].

\bi{Bas_H} P. Baseilhac, S. Tsujimoto, L. Vinet, and A. Zhedanov, {\it The Heun-Askey-Wilson 
Algebra and the Heun Operator of Askey-Wilson Type}, Annales Henri Poincar\'e {\bf 20} , 
3091--3112, 2019, arXiv: 1811.11407.

\bi{AAMF} M.K.Atakishiyeva, N.M.Atakishiyev and J.M{\'e}ndez Franco, {\it On a discrete 
number operator associated with the 5D  discrete Fourier transform}, Springer Proceedings 
in Mathematics\,\&\,Statistics, {\bf 164}, 273--292, 2016.

\bi{HarmAnal} M.C.Pereyra and L.A.Ward, Harmonic analysis: from Fourier to wavelets, 
AMS, Providence, Rhode Island, 2012.

\bi{FFT} K.R.Rao, D.N.Kim, J.J.Hwang, Fast Fourier Transform: Algorithms and Applications,
Springer, Dordrecht, Heidelberg, London, New York, 2010.

\end{thebibliography}

\end{document}